\definecolor{afiablue}{RGB}{61,159,207}
\definecolor{afiared}{RGB}{167,75,68}
\definecolor{afialightblue}{RGB}{158,193,232}
\theoremstyle{definition}
\newtheorem{definition}{Définition}
\newtheorem{theorem}{Théorème}
\newtheorem{proposition}{Proposition}
\newtheorem{corollary}{Corollaire}[theorem]
\newtheorem{example}{Exemple}
\title{\textbf{Un cadre paraconsistant pour l'évaluation de similarité dans les bases de connaissances}}
\author{José-Luis Vilchis-Medina \\
ENSTA - Institut Polytechnique de Paris, Lab-STICC, Robex, Brest, 29200, France}
\date{jose.vilchis@ensta.fr}
\begin{document}

\maketitle

% ------------------------------------------
% RÉSUMÉS ET MOTS-CLÉS
% ------------------------------------------

\begin{resume}
Cet article propose un \textbf{cadre paraconsistant pour l’évaluation de la similarité} dans les bases de connaissances. Contrairement aux approches classiques, ce cadre intègre explicitement les \textbf{contradictions}, permettant une mesure de similarité plus robuste et interprétable. Une nouvelle mesure $ S^* $ est introduite, pénalisant les incohérences tout en valorisant les propriétés partagées. Des \textbf{super-catégories paraconsistantes $ \Xi_K^* $} sont définies pour organiser hiérarchiquement les entités de connaissances. Le modèle inclut également un \textbf{extracteur de contradictions $ E $} et un mécanisme de réparation, garantissant la cohérence des évaluations. Les résultats théoriques assurent la réflexivité, la symétrie et les bornes de $ S^* $. Cette approche offre une solution prometteuse pour la gestion des connaissances conflictuelles, avec des perspectives dans les systèmes multi-agents.
\end{resume}

\begin{motscles}
%Exemple type, format, modèle.
Mesure de similarité, Logique paraconsistante, Contradictions, Représentation des connaissances.
\end{motscles}

\begin{abstract}
%It's the English version of the abstract. Exactly as in French it must be short. It must speak of the same topics...
This article proposes a \textbf{paraconsistent framework for evaluating similarity} in knowledge bases. Unlike classical approaches, this framework explicitly integrates \textbf{contradictions}, enabling a more robust and interpretable similarity measure. A new measure $ S^* $ is introduced, which penalizes inconsistencies while rewarding shared properties. \textbf{Paraconsistent super-categories $ \Xi_K^* $} are defined to hierarchically organize knowledge entities. The model also includes a \textbf{contradiction extractor $ E $} and a repair mechanism, ensuring consistency in the evaluations. Theoretical results guarantee reflexivity, symmetry, and boundedness of $ S^* $. This approach offers a promising solution for managing conflicting knowledge, with perspectives in multi-agent systems.
\end{abstract}

\begin{keywords}
%Example, model, template.
Similarity measure, Paraconsistent logic, Contradictions, Knowledge representation.
\end{keywords}

% ------------------------------------------
% CORPS DE L'ARTICLE
% ------------------------------------------

\section{Introduction}

Dans un contexte marqué par une prolifération croissante d'informations, l'interaction entre les systèmes formels et le raisonnement qualitatif s'avère essentielle \cite{Goldstone2012}. La logique fournit une structure rigoureuse pour l'analyse des arguments et la résolution de problèmes complexes, tandis que le raisonnement qualitatif permet de saisir les nuances et le contexte souvent négligés par les approches purement quantitatives. Cette synergie est particulièrement cruciale dans les processus décisionnels, où les informations incomplètes ou contradictoires sont fréquentes \cite{Brachman2004}.

Les bases de connaissances (BC) jouent un rôle central dans la gestion et la consultation d'informations structurées. Cependant, leur utilité est souvent entravée par des défis inhérents, tels que les contradictions, les données incomplètes et l'évolution dynamique des connaissances \cite{Libkin2014}. Les méthodes traditionnelles d’évaluation de la similarité dans les BC reposent généralement sur des approches numériques ou probabilistes, qui peinent à gérer efficacement ces problèmes \cite{Pearl1988}. En particulier, ces méthodes supposent souvent l’absence de contradiction, ce qui n’est pas réaliste dans des environnements complexes comme les systèmes juridiques ou médicaux \cite{Tversky1977}. De plus, elles présentent une rigidité hiérarchique qui limite leur adaptabilité face aux évolutions des structures de connaissance.

Pour pallier ces limitations, cet article propose un cadre fondé sur des principes de logiques paraconsistentes, un vaste ensemble de formalismes permettant de raisonner en présence de contradictions sans tomber dans la trivialité \cite{Béziau2002}. Contrairement aux logiques classiques, où une seule contradiction entraîne l'effondrement du système (principe d'explosion), les logiques paraconsistentes isolent les incohérences et permettent de continuer à raisonner sur les parties non contradictoires des connaissances \cite{daCosta1974, Priest2002}. Bien qu’il existe de nombreuses variantes de logiques paraconsistentes, notre travail se concentre sur leur application originale au problème de la mesure de similarité dans les bases de connaissances.

\subsection*{Fondements de la logique paraconsistante}

Parmi les différentes logiques paraconsistentes, nous adoptons un cadre symbolique basé sur des opérateurs non classiques pour modéliser les entités de connaissance. Cette approche permet de représenter explicitement les contradictions tout en conservant la cohérence globale du système. Le concept de similarité occupe une place centrale dans de nombreux domaines, notamment la médecine, le droit et l'intelligence artificielle \cite{Tversky1977}. Cependant, les méthodes existantes rencontrent souvent des difficultés pour gérer les contradictions et les données incomplètes, conduisant à des évaluations erronées ou non pertinentes \cite{Resnik1995}. Notre cadre introduit la notion d’extracteur de contradictions et d’entités réparables, permettant des évaluations significatives même en présence d’incohérences. En résolvant les contradictions tout en préservant les structures hiérarchiques, cette approche garantit à la fois la cohérence et l’interprétabilité.

Cet article présente un cadre original pour surmonter les limites des mesures de similarité classiques. L’espace de propriétés de similarité $ \Xi_P $ est étendu avec des propriétés contradictoires (notées $ \Xi_P^* $), permettant ainsi la formalisation de la similarité en présence de contradictions. La mesure de similarité paraconsistante proposée $ S^* $ quantifie la similarité en tenant compte explicitement des propriétés partagées et contradictoires, assurant une robustesse face aux données incomplètes ou conflictuelles. De plus, les super-catégories paraconsistentes $ \Xi_K^* $ fournissent une structure hiérarchique flexible pour organiser les entités de connaissances en fonction de leurs scores de similarité, améliorant ainsi l’interprétabilité et l’adaptabilité \cite{Vilchis2024}.

\subsection*{Contributions principales}

Les contributions principales de ce travail incluent :

\begin{itemize}
    \item Une définition formelle de $ \Xi_P^* $ et $ S^* $, généralisant les mesures de similarité classiques pour intégrer les contradictions. Cette extension permet de capturer la nature nuancée des connaissances réelles, où des propriétés conflictuelles coexistent souvent.
 
    \item Des garanties théoriques, y compris la réflexivité, la symétrie de $ S^* $, assurant des propriétés fondamentales pour les applications pratiques. Ces propriétés garantissent que la mesure de similarité reste cohérente et interprétable, même en présence de contradictions.
 
    \item Un cadre pour l'organisation hiérarchique des connaissances via $ \Xi_K^* $, validé par des exemples issus des domaines médicaux et juridiques. Ce cadre permet de regrouper dynamiquement les entités de connaissances en fonction de leurs scores de similarité, reflétant ainsi les hiérarchies naturelles des domaines d'application.
   
    \item Un extracteur de contradictions $ \mathcal{E} $ et un mécanisme de réparation, qui permettent de gérer les incohérences tout en préservant la cohérence des bases de connaissances. Ces outils sont essentiels pour garantir la robustesse du cadre dans des contextes dynamiques et complexes.
\end{itemize}

%\subsection*{Structure de l'article}

Cet article est structuré comme suit : la Section \ref{sec:sota} passe en revue les travaux connexes et met en évidence les limitations des approches existantes. La Section \ref{sec:cadre_propose} introduit le cadre proposé, en détaillant les définitions formelles, les propositions et les théorèmes. Ensuite, la Section \ref{sec:discussion} analyse les contributions. %, la section \ref{sec:limites_future_works} aborde les limites du cadre tout en proposant des pistes pour des travaux futurs. 
Enfin, la Section \ref{sec:conclusion} conclut en résumant les apports principaux et en explorant les perspectives. % pour les systèmes multi-agents.

\section{État de l'art}\label{sec:sota}

\subsection*{Approches logiques et probabilistes}

L'évaluation de la similarité dans les bases de connaissances a été largement explorée à travers diverses méthodologies. Les approches traditionnelles reposent souvent sur la logique du premier ordre (FOL) \cite{Brachman2004} ou sur des modèles probabilistes \cite{Biazzo2002}. Bien que les systèmes basés sur la FOL excellent dans le raisonnement formel, ils manquent de mécanismes pour gérer les contradictions, ce qui peut entraîner une trivialisation dans des contextes incohérents \cite{Sheremet2007}. Les méthodes probabilistes, telles que les réseaux bayésiens \cite{Cheng2010}, modélisent efficacement l'incertitude mais nécessitent des hypothèses distributionnelles fortes et peuvent échouer dans des ensembles de données clairsemés ou biaisés \cite{Minicozzi1976}. Les progrès récents en apprentissage profond \cite{Devlin2018} et en traitement du langage naturel (NLP) \cite{Cer2018} ont considérablement amélioré les mesures de similarité sémantique. Cependant, ces méthodes restent gourmandes en données et souvent opaques dans leurs processus de raisonnement, limitant leur interprétabilité dans des domaines critiques tels que le droit ou la médecine \cite{Hofmann1999, Kandola2002}.

\subsection*{Méthodes basées sur les graphes et les ontologies}

La théorie des graphes \cite{Gao2010} et l'alignement des ontologies \cite{Shvaiko2011} offrent des représentations structurées des connaissances, permettant une meilleure interprétabilité et la modélisation de relations complexes \cite{Blondel2004, Sowa2008}. Cependant, ces approches sont confrontées à des défis computationnels importants, notamment lorsqu'il s'agit de traiter de grandes bases de données. Par exemple, l'alignement des entités dans les graphes de connaissances \cite{Shvaiko2011} nécessite souvent des systèmes à haute mémoire et un réglage manuel approfondi. Les approches basées sur les ontologies \cite{Guarino2009} améliorent l'interprétabilité mais peinent à suivre les mises à jour dynamiques et l'évolution des connaissances, limitant ainsi leur applicabilité dans des domaines en évolution rapide \cite{Doan2004}.

\subsection*{Limitations des cadres existants}

Les méthodes actuelles pour évaluer la similarité dans les bases de connaissances présentent plusieurs limitations critiques :
\begin{itemize}
    \item \textbf{Ignorance des contradictions :} Les mesures de similarité classiques \cite{Resnik1995} considèrent les contradictions comme irréconciliables, conduisant à une surestimation de la similarité. Cette approche ne capture pas la complexité des connaissances réelles, où des propriétés conflictuelles coexistent souvent.
    \item \textbf{Manque d'adaptabilité hiérarchique :} Les systèmes existants imposent souvent des classifications rigides, ne reflétant pas les hiérarchies nuancées inhérentes aux domaines de connaissances. Cette rigidité limite leur capacité à s'adapter à des structures évolutives ou spécifiques à un domaine.
    \item \textbf{Problèmes de transparence :} Les modèles boîte noire, en particulier ceux basés sur l'apprentissage profond \cite{Devlin2018}, entravent l'interprétabilité. Ce manque de transparence est particulièrement problématique dans des domaines comme le droit ou la médecine, où la justification des décisions est essentielle.
\end{itemize}

Enfin, les métriques de similarité classiques, bien qu’efficaces dans certains contextes, souffrent de limites importantes lorsqu’il s’agit de gérer des contradictions ou des incertitudes. Des exemples notables incluent :
\begin{itemize}
    \item La distance de Levenshtein \cite{yujian2007normalized}, qui ne prend pas en compte les incohérences sémantiques.
    \item La distance de Hamming \cite{norouzi2012hamming}, qui suppose des chaînes de longueurs identiques.
    \item La distance de Jaro-Winkler \cite{wang2017efficient}, qui reste sensible aux erreurs de transcription sans gérer les conflits logiques.
\end{itemize}

\subsection*{La logique paraconsistante comme solution}

La logique paraconsistante \cite{Brachman1984} offre une alternative prometteuse en permettant un raisonnement avec des contradictions sans s'effondrer dans la trivialité \cite{Sheremet2007}. Contrairement aux logiques traditionnelles, le raisonnement paraconsistant permet des comparaisons significatives même lorsque les entités de connaissances contiennent des propriétés conflictuelles \cite{Shvaiko2011}. Des travaux récents ont exploré l'intégration de la logique paraconsistante avec les systèmes de représentation des connaissances, démontrant son potentiel pour améliorer la robustesse et la transparence \cite{Grosof2003, Guarino2009}.

Le cadre proposé s'appuie sur cette base pour répondre aux limitations des approches existantes :
\begin{itemize}
    \item \textbf{Intégration des contradictions :} Un extracteur de contradictions et un mécanisme de réparation sont introduits, intégrant les contradictions dans $\Xi_P^*$ pour éviter la perte d'information et assurer la cohérence.
    \item \textbf{Pénalisation des contradictions :} Le cadre définit $S^*$ pour pénaliser les contradictions, assurant une robustesse face au bruit et améliorant la précision des évaluations de similarité.
    \item \textbf{Regroupement hiérarchique dynamique :} En exploitant $\Xi_K^*$, le cadre regroupe dynamiquement les entités de connaissances via des seuils de similarité $\theta$, reflétant les hiérarchies de domaine et permettant une adaptabilité. % (Corollaire \ref{coro:para_super_cat}).
\end{itemize}

Malgré ces avancées, des défis subsistent, notamment dans la gestion de l'évolution dynamique des connaissances et la nature subjective de la sélection des propriétés \cite{Turney2010}. Le cadre proposé aborde ces problèmes en introduisant des espaces de similarité adaptatifs et des mécanismes pour une sélection cohérente des propriétés, assurant à la fois la robustesse et l'adaptabilité dans les évaluations de similarité \cite{Vilchis2024}.

\section{Cadre proposé}
\label{sec:cadre_propose}

Le cadre proposé repose sur une modélisation rigoureuse des entités de connaissances au sein d’un langage de logique du premier ordre, permettant l’intégration explicite des contradictions dans l’évaluation de similarité. Ce cadre inclut une mesure de similarité paraconsistante \( S^* \), une hiérarchie organisée via des super-catégories paraconsistantes \( \Xi_K^* \), ainsi qu’un extracteur de contradictions \( E \) et un mécanisme de réparation pour gérer les incohérences. Les propriétés fondamentales garantissent la robustesse et l’interprétabilité du modèle.

\subsection{Définitions formelles des entités de connaissance}
\label{subsec:def_entities}

Soit \(\mathcal{K} = \{K_1, K_2, \dots, K_n\}\) l'ensemble fini des entités de connaissances traitées dans la base de connaissances. Chaque entité \(K_i\) est définie comme un ensemble fini de littéraux clos dans un langage de logique du premier ordre \(\mathcal{L}\), soit :
\begin{equation}
    K_i = \{\ell_1, \ldots, \ell_m\}, \quad \text{où } \ell_j \in \text{Lit}(\mathcal{L}),
\end{equation}
où \(\text{Lit}(\mathcal{L})\) désigne l’ensemble des littéraux atomiques et négatifs clos de \(\mathcal{L}\).
Cette définition fournit une base sémantique claire pour évaluer les relations entre entités, tout en autorisant la coexistence de propriétés contradictoires. Chaque littéral \(\ell_j\) peut représenter une propriété spécifique ou son négation, ce qui permet de modéliser des connaissances complexes et ambivalentes.

\subsection{Espace de propriétés paraconsistantes}
\label{subsec:espaces_prop}

L’espace de propriétés de similarité classique \(\Xi_P\) est étendu pour intégrer explicitement les propriétés contradictoires, conduisant à la définition suivante :

\begin{definition}[Espace paraconsistant des propriétés]
\label{def:XiPstar}
On définit l’espace paraconsistant des propriétés comme :
\begin{equation}
    \Xi_P^* = \Xi_P \cup P_{\text{contradictory}},
\end{equation}
où :
\begin{itemize}
    \item \( \Xi_P \) : ensemble des propriétés partagées par deux entités \( K_1 \) et \( K_2 \),
    \item \( P_{\text{contradictory}} \) : ensemble des couples de propriétés \((p, \neg p)\) entre \( K_1 \) et \( K_2 \).
\end{itemize}
\end{definition}
Cette définition permet de formaliser les interactions entre entités de connaissances en présence de conflits, sans invalider l’ensemble du raisonnement. Elle introduit explicitement les contradictions, ce qui est crucial pour une évaluation de similarité robuste en contextes incertains ou conflictuels.

\subsection{Mesure de similarité paraconsistante \( S^* \)}
\label{subsec:Sstar}

La mesure \( S^* \) est reformulée pour refléter à la fois la similarité positive et la divergence induite par les contradictions :

\begin{definition}[Mesure de similarité paraconsistante]
\label{def:Sstar}
Soient \( K_1 \) et \( K_2 \) deux entités de connaissances. On définit la mesure de similarité paraconsistante \( S^* \) comme :
\begin{equation}\label{eq:simi_para}
    S^*(K_1, K_2) = S^+(K_1, K_2) - D^\pm(K_1, K_2),
\end{equation}
où :
\begin{equation}
S^+(K_1, K_2) = \frac{|P_{\text{shared}}|}{|P_{\text{total}}|}, \quad D^\pm(K_1, K_2) = \frac{|P_{\text{contradictory}}|}{|P_{\text{total}}|}.    
\end{equation}
Ainsi, \( S^* \in [-1, 1] \), avec :
\begin{itemize}
    \item \( S^* > 0 \) : mesures nettement similaires,
    \item \( S^* = 0 \) : équilibre entre similitude et divergence,
    \item \( S^* < 0 \) : divergence dominante.
\end{itemize}
\end{definition}
Cette formulation permet de pénaliser les contradictions tout en valorisant les propriétés partagées. Par exemple, si \( K_1 \) et \( K_2 \) partagent une propriété \( p \) mais contiennent également une contradiction \( (p, \neg p) \), la mesure \( S^* \) sera négative, reflétant une divergence nette. Cela garantit une interprétation claire et intuitive de la similarité en présence de conflits.

\subsection{Propriétés fondamentales de \( S^* \)}
\label{subsec:proprietes_Sstar}

Les propriétés essentielles garantissant la cohérence du cadre sont les suivantes :

\begin{proposition}[Réflexivité]
\label{prop:reflexivite}
Pour toute entité de connaissance \( K \), on a :
\begin{equation}
    S^*(K, K) = 1.
\end{equation}
\end{proposition}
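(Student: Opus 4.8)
The plan is to unfold the two constituents of $S^*$ on the diagonal $K_1 = K_2 = K$ and evaluate each separately, then recombine them through the defining equation~\eqref{eq:simi_para}. First I would treat the positive term $S^+$. When an entity is compared with itself, every closed literal of $K$ is trivially shared, so the set of shared properties coincides with the full set of properties under consideration, that is $P_{\text{shared}} = P_{\text{total}}$. Since each $K_i$ is a finite set of literals (Section~\ref{subsec:def_entities}), assuming $K$ is nonempty so that $|P_{\text{total}}| \neq 0$ and the ratio is well defined, this yields at once
\begin{equation}
S^+(K, K) = \frac{|P_{\text{shared}}|}{|P_{\text{total}}|} = \frac{|P_{\text{total}}|}{|P_{\text{total}}|} = 1.
\end{equation}

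The second step is to establish that the divergence term vanishes, i.e. $D^\pm(K, K) = 0$. Here I would appeal to the reading of $P_{\text{contradictory}}$ from Definition~\ref{def:XiPstar} as the set of cross-entity conflicting couples $(p, \neg p)$, with $p$ drawn from one entity and $\neg p$ from the other. On the diagonal, producing such a couple would require $K$ to contain simultaneously a literal $p$ and its negation $\neg p$. Under the hypothesis that $K$ is internally coherent, no pair of complementary literals both belong to $K$, so $P_{\text{contradictory}} = \emptyset$, whence $|P_{\text{contradictory}}| = 0$ and therefore $D^\pm(K, K) = 0$. Substituting the two computations into equation~\eqref{eq:simi_para} then gives $S^*(K, K) = S^+(K, K) - D^\pm(K, K) = 1 - 0 = 1$, which is the claim.

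The step I expect to be the main obstacle is precisely the treatment of entities that are themselves inconsistent. If $K$ already contains a couple $(p, \neg p)$, then on the most literal reading of Definition~\ref{def:XiPstar} one would have $D^\pm(K, K) > 0$, and reflexivity as stated in Proposition~\ref{prop:reflexivite} would fail. The argument therefore hinges on fixing the semantics of the contradiction extractor so that self-comparison does not register an entity's internal tensions as an external divergence, either by restricting $P_{\text{contradictory}}$ to genuinely cross-entity conflicts — which collapse to none on the diagonal — or by invoking the repair mechanism so that $S^*$ is evaluated on the coherent core of $K$. I would make this convention explicit as a standing hypothesis of the proposition; once it is in place, the remaining algebra is routine and the two displayed evaluations close the argument.
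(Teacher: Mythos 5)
Your proof follows the same route as the paper's: evaluate the two terms of Équation~\ref{eq:simi_para} on the diagonal, observe $P_{\text{shared}} = P_{\text{total}}$ so that $S^+(K,K)=1$, argue $P_{\text{contradictory}}=\emptyset$ so that $D^\pm(K,K)=0$, and subtract. The difference lies in the two caveats you make explicit, and both are genuine improvements over the paper's argument. The paper asserts \emph{par définition} that $P_{\text{contradictory}}=\emptyset$ when $K$ is compared with itself; but the framework expressly allows entities to be internally inconsistent --- Section~\ref{subsec:def_entities} permits the coexistence of contradictory properties within a single $K$, and the extractor $E$ and the repair mechanism of Section~\ref{subsec:extractor_reparation} exist precisely because some $K$ contains both $p$ and $\neg p$. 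For such a $K$, a literal reading of Définition~\ref{def:XiPstar} yields the couple $(p,\neg p)$ between $K$ and itself, hence $D^\pm(K,K)>0$ and $S^*(K,K)<1$: the proposition as stated fails unless one adds exactly the standing hypothesis you propose (internal coherence of $K$, or the convention that only cross-entity conflicts count, these collapsing on the diagonal, or evaluation on the repaired core $K'$). Your remark that $|P_{\text{total}}|\neq 0$ requires $K\neq\emptyset$ likewise fills a silent well-definedness gap, since the paper divides by $|P_{\text{total}}|$ without comment. In short: same decomposition, same algebra, but your version is the sound one --- the paper's proof is valid only under the unstated assumptions you identified, and making them explicit hypotheses of the proposition is the right fix.
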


\begin{proof}
Par définition, \( P_{\text{shared}} = P_{\text{total}} \) et \( P_{\text{contradictory}} = \emptyset \). Donc :
\begin{equation}
    S^*(K, K) = \frac{|P_{\text{total}}|}{|P_{\text{total}}|} - \frac{0}{|P_{\text{total}}|} = 1 - 0 = 1.
\end{equation}
\end{proof}

\begin{proposition}[Symétrie]
\label{prop:symetrie}
Pour tous \( K_1, K_2 \in \mathcal{K} \), on a :
\begin{equation}
    S^*(K_1, K_2) = S^*(K_2, K_1).
\end{equation}
\end{proposition}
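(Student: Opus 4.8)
The plan is to reduce the symmetry of $S^*$ to the symmetry of the three cardinalities that define it, and then to check that each is invariant under exchanging $K_1$ and $K_2$. Since $S^*(K_1,K_2) = S^+(K_1,K_2) - D^\pm(K_1,K_2)$ with $S^+ = |P_{\text{shared}}|/|P_{\text{total}}|$ and $D^\pm = |P_{\text{contradictory}}|/|P_{\text{total}}|$, it suffices to show that $|P_{\text{shared}}|$, $|P_{\text{contradictory}}|$ and $|P_{\text{total}}|$ do not depend on the order of the two arguments; the difference of the two ratios will then be symmetric automatically.

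First I would treat $P_{\text{shared}}$. By Definition \ref{def:XiPstar}, a shared property is one possessed by both entities, so $P_{\text{shared}}$ is the set of properties common to $K_1$ and $K_2$. Because membership in both entities is a condition symmetric in the two arguments (intersection is commutative), one has $P_{\text{shared}}(K_1,K_2) = P_{\text{shared}}(K_2,K_1)$ as sets, hence equal cardinalities. The same commutativity argument applies to $P_{\text{total}}$, which is the normalizing pool of properties under consideration (for instance the union of the properties of $K_1$ and $K_2$): a union is likewise invariant under exchange, so $|P_{\text{total}}|$ is unchanged.

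The delicate step, and the one I expect to be the main obstacle, concerns $P_{\text{contradictory}}$, since Definition \ref{def:XiPstar} presents it as a set of couples $(p,\neg p)$, which at first sight could distinguish the entity contributing $p$ from the one contributing $\neg p$. The key observation to formalize is that a contradiction between the two entities is a symmetric relation: there is a conflict on the literal $p$ precisely when one entity contains $p$ and the other contains $\neg p$, a statement that privileges neither entity. I would therefore argue that the set of witnessing couples obtained when reading the pair $(K_1,K_2)$ coincides, up to the trivial relabelling $(p,\neg p)\leftrightarrow(\neg p,p)$, with the set obtained when reading $(K_2,K_1)$. This relabelling is a bijection, so $|P_{\text{contradictory}}(K_1,K_2)| = |P_{\text{contradictory}}(K_2,K_1)|$.

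Finally I would conclude by combining the three equalities: the numerators and the common denominator of both $S^+$ and $D^\pm$ are preserved under the exchange, whence $S^+(K_1,K_2) = S^+(K_2,K_1)$ and $D^\pm(K_1,K_2) = D^\pm(K_2,K_1)$, and therefore $S^*(K_1,K_2) = S^*(K_2,K_1)$. The only real content lies in pinning down the symmetric reading of the contradictory couples; everything else reduces to the commutativity of intersection and union.
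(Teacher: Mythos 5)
Your proof is correct, but note that the paper itself states Proposition~\ref{prop:symetrie} \emph{without} proof (only reflexivity receives one), so there is no official argument to compare against; yours supplies exactly the argument the paper leaves implicit. Reducing the symmetry of \( S^* \) to the invariance of the three cardinalities \( |P_{\text{shared}}| \), \( |P_{\text{contradictory}}| \) and \( |P_{\text{total}}| \) under exchange of the arguments is the natural route, and your instinct to single out \( P_{\text{contradictory}} \) as the only delicate point is well placed: Definition~\ref{def:XiPstar} presents it as ordered couples \( (p, \neg p) \), and the relabelling bijection \( (p,\neg p) \leftrightarrow (\neg p, p) \) is precisely what is needed to dismiss the apparent asymmetry. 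One small caveat worth making explicit: the paper never formally defines \( P_{\text{total}} \), so your parenthetical guess ("for instance the union") is doing real work; Example~\ref{ex:contradiction_simple} confirms that \( P_{\text{total}} = K_1 \cup K_2 \), which is symmetric by commutativity of union, so your argument goes through as written.
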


\begin{proposition}[Bornes extrêmes]
\label{prop:bord}
Pour toutes entités \( K_1, K_2 \), la mesure \( S^* \) satisfait :
\begin{equation}
-1 \leq S^*(K_1, K_2) \leq 1.    
\end{equation}
\end{proposition}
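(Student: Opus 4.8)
The plan is to reduce the claim to elementary bounds on the two constituent ratios defining $S^*$ in Definition~\ref{def:Sstar}. Since
\begin{equation*}
S^*(K_1, K_2) = S^+(K_1, K_2) - D^\pm(K_1, K_2) = \frac{|P_{\text{shared}}|}{|P_{\text{total}}|} - \frac{|P_{\text{contradictory}}|}{|P_{\text{total}}|},
\end{equation*}
both terms are nonnegative fractions sharing the common denominator $|P_{\text{total}}|$. First I would establish that each term lies in $[0,1]$ individually. This rests on the set-theoretic observation that both $P_{\text{shared}}$ and $P_{\text{contradictory}}$ are subsets of $P_{\text{total}}$, whence $0 \le |P_{\text{shared}}| \le |P_{\text{total}}|$ and $0 \le |P_{\text{contradictory}}| \le |P_{\text{total}}|$. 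Dividing through by $|P_{\text{total}}|$ yields $S^+ \in [0,1]$ and $D^\pm \in [0,1]$.

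The bounds then follow by combining the two extreme cases of the difference. For the upper bound, I would use $S^+(K_1,K_2) \le 1$ together with $D^\pm(K_1,K_2) \ge 0$, giving $S^* \le 1 - 0 = 1$. For the lower bound, I would symmetrically use $S^+(K_1,K_2) \ge 0$ together with $D^\pm(K_1,K_2) \le 1$, giving $S^* \ge 0 - 1 = -1$. Together these yield $-1 \le S^*(K_1,K_2) \le 1$, as claimed; note that the extreme values are attained exactly when one set exhausts $P_{\text{total}}$ while the other is empty, which also explains the endpoint behaviour flagged in Definition~\ref{def:Sstar}.

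The main obstacle is not the arithmetic but the well-definedness of the measure: the argument tacitly requires $P_{\text{total}} \neq \emptyset$, since otherwise both ratios involve division by zero. I would therefore make explicit the standing assumption that $|P_{\text{total}}| \ge 1$ (or fix a convention, e.g.\ $S^* = 1$, for the degenerate case of two entities sharing no comparable properties), and I would confirm from the construction of $\Xi_P^*$ in Definition~\ref{def:XiPstar} that $P_{\text{shared}}$ and $P_{\text{contradictory}}$ are genuinely subsets of the counting universe $P_{\text{total}}$. Once these inclusions are justified, the remainder is a routine interval estimate on a difference of two quantities in $[0,1]$.
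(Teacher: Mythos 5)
Your proof is correct; note that the paper states Proposition~\ref{prop:bord} \emph{without} any proof at all, so there is nothing to compare against --- your interval estimate (both ratios lie in $[0,1]$ since $P_{\text{shared}}$ and $P_{\text{contradictory}}$ count within $P_{\text{total}}$, hence their difference lies in $[-1,1]$) is exactly the routine argument the proposition tacitly relies on, and it is consistent with the computation pattern in the paper's proof of Proposition~\ref{prop:reflexivite}. Your well-definedness caveat is a genuine catch: the paper nowhere excludes $P_{\text{total}} = \emptyset$, and the same division by $|P_{\text{total}}|$ occurs unguarded in the réflexivité proof, so making the standing assumption $|P_{\text{total}}| \geq 1$ explicit (or fixing a convention for the degenerate case) strengthens the framework, not just this proposition. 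One small inaccuracy in your closing remark about endpoint attainment: under the paper's own counting convention, visible in Exemple~\ref{ex:contradiction_simple} and in the medical example, each contradictory pair $(p, \neg p)$ is counted \emph{once} in $P_{\text{contradictory}}$ but contributes \emph{both} literals to $P_{\text{total}}$, so that $|P_{\text{total}}| \geq |P_{\text{shared}}| + 2\,|P_{\text{contradictory}}|$. This gives $S^+ + 2D^\pm \leq 1$, hence in fact $D^\pm \leq \tfrac{1}{2}$ and the sharper bound $S^*(K_1, K_2) \geq -\tfrac{1}{2}$: the stated lower bound $-1$ is valid but never attained, contrary to your claim that it is reached when $P_{\text{contradictory}}$ exhausts $P_{\text{total}}$ (which that convention makes impossible). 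This does not affect the correctness of your proof of the proposition as stated, but it shows your subset inclusions should be verified against the paper's counting conventions rather than assumed, precisely as you yourself suggested in your final paragraph.
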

Ces propriétés garantissent que \( S^* \) est une mesure cohérente et interprétable, même en présence de contradictions. La réflexivité et la symétrie sont des propriétés standard pour les mesures de similarité, tandis que les bornes garantissent que \( S^* \) reste dans un intervalle raisonnable.

\subsection{Super-catégories paraconsistantes $\Xi_{K}^{*}$}
\label{subsec:super_categories}

Les entités de connaissances peuvent être organisées hiérarchiquement selon leur niveau de similarité via le seuil \( \theta \in [-1, 1] \).

\begin{definition}[Super-catégories paraconsistantes]
\label{def:super_categories}
Étant donné un seuil \( \theta \), les super-catégories paraconsistantes forment un ensemble défini par :
\begin{equation}\label{eq:super_cat_para}
    \Xi_K^* = \bigcup_{i=1}^n \left\{ K_i \mid S^*(K_i, K_j) > \theta, \forall j \neq i \right\}.
\end{equation}
\end{definition}

\begin{theorem}[Disjonction des super-catégories]
\label{thm:disjonction_super_cats}
Si \( K_1 \in \Xi_K^* \) et \( K_2 \in \Xi_K^* \) appartiennent à différentes super-catégories, alors :
\begin{equation}
    S^*(K_1, K_2) \leq \theta.
\end{equation}
\end{theorem}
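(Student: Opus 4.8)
Le plan est de procéder par contraposition, en m'appuyant directement sur la Définition \ref{def:super_categories}, qui organise les entités selon le seuil $\theta$. L'idée centrale est que l'appartenance à une même super-catégorie est gouvernée par la condition $S^*(K_i, K_j) > \theta$ : deux entités sont regroupées précisément lorsque leur similarité dépasse strictement le seuil. Par conséquent, affirmer que $K_1$ et $K_2$ relèvent de super-catégories distinctes revient à nier que cette condition de regroupement soit satisfaite pour le couple $(K_1, K_2)$, ce qui fournit immédiatement l'inégalité recherchée.

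Concrètement, je supposerais par l'absurde que $K_1$ et $K_2$ appartiennent à des super-catégories différentes tout en vérifiant $S^*(K_1, K_2) > \theta$. D'abord, je rappellerais que, d'après l'équation \eqref{eq:super_cat_para}, le critère d'agrégation impose qu'une entité soit placée dans une super-catégorie dès que sa similarité avec les autres membres excède $\theta$. Ensuite, j'invoquerais la symétrie de $S^*$ (Proposition \ref{prop:symetrie}), qui assure que $S^*(K_1, K_2) > \theta$ équivaut à $S^*(K_2, K_1) > \theta$, de sorte que la relation de regroupement ne dépend pas de l'ordre des entités. Il en découlerait que $K_1$ et $K_2$ satisferaient mutuellement le critère de la Définition \ref{def:super_categories} et devraient donc figurer dans une même super-catégorie, contredisant l'hypothèse de départ. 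On conclurait alors que nécessairement $S^*(K_1, K_2) \leq \theta$.

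La principale difficulté ne réside pas dans le calcul, qui est immédiat, mais dans la mise en cohérence de l'énoncé avec la Définition \ref{def:super_categories}. En effet, telle qu'elle est écrite, l'équation \eqref{eq:super_cat_para} présente $\Xi_K^*$ comme une réunion unique plutôt que comme une famille de super-catégories disjointes ; pour que la notion de super-catégories distinctes ait un sens précis, il faut expliciter la relation de partition sous-jacente. Je la formaliserais en introduisant la relation $K_a \sim_\theta K_b \iff S^*(K_a, K_b) > \theta$, dont chaque super-catégorie constituerait une classe. Le point délicat est alors de vérifier que cette relation induit effectivement une partition bien définie : la réflexivité (Proposition \ref{prop:reflexivite}) et la symétrie (Proposition \ref{prop:symetrie}) sont acquises, mais la transitivité n'est pas garantie par les propriétés déjà démontrées. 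Il faudrait donc soit supposer une clôture transitive, soit définir les super-catégories comme les composantes connexes du graphe de $\theta$-similarité ; c'est cet arbitrage définitionnel qui constitue le véritable obstacle à une preuve entièrement rigoureuse, le reste de l'argument se réduisant à une application directe de la contraposée.
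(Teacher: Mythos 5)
Le théorème \ref{thm:disjonction_super_cats} est énoncé \emph{sans démonstration} dans l'article : il n'existe donc pas de preuve de référence à laquelle comparer votre tentative. Votre argument par contraposition est celui qu'on attendrait, et il est correct sous la lecture voulue de la Définition \ref{def:super_categories} --- celle effectivement utilisée dans l'Exemple \ref{ex:hierarchie}, où \( \Xi_K^* \) est traité comme une partition en groupes. Surtout, vous identifiez le vrai point faible, que l'on peut rendre encore plus aigu : prise à la lettre, l'équation \eqref{eq:super_cat_para} définit \( \Xi_K^* \) comme un \emph{unique ensemble}, celui des \( K_i \) tels que \( S^*(K_i, K_j) > \theta \) pour tout \( j \neq i \) ; sous cette lecture littérale, les hypothèses \( K_1 \in \Xi_K^* \) et \( K_2 \in \Xi_K^* \) impliquent directement \( S^*(K_1, K_2) > \theta \), c'est-à-dire la négation de la conclusion du théorème. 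L'énoncé n'est donc pas seulement imprécis : il est incompatible avec sa propre définition, et la notion d'appartenance à des super-catégories différentes n'a de sens qu'après la reconstruction que vous esquissez.

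Votre reconstruction par composantes connexes du graphe de \( \theta \)-similarité est exactement la bonne, et elle rend la preuve immédiate : deux entités situées dans des composantes distinctes ne peuvent pas être adjacentes dans ce graphe, donc \( S^*(K_1, K_2) \leq \theta \), sans qu'aucune transitivité intra-composante ne soit requise --- le théorème ne porte que sur les paires inter-composantes, ce qui dissout l'obstacle que vous signalez en dernière partie comme un véritable blocage. La lecture alternative par cliques (similarité deux à deux strictement supérieure à \( \theta \) au sein d'un même groupe, conforme à l'exemple médical de l'article) valide également votre contraposée. En résumé : votre preuve est correcte modulo la réparation définitionnelle que vous proposez vous-même, et votre analyse est plus rigoureuse que l'article, qui laisse ce théorème sans démonstration et adossé à une définition qui contredit son propre énoncé.
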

Les super-catégories \( \Xi_K^* \) permettent de regrouper dynamiquement les entités de connaissances en fonction de leur score de similarité, tout en respectant une hiérarchie naturelle. Cela améliore l’interprétabilité et l’adaptabilité du cadre, en particulier dans des domaines de connaissances dynamiques.

\subsection{Extracteur de contradictions et mécanisme de réparation}
\label{subsec:extractor_reparation}

L'un des défis principaux dans la gestion des bases de connaissances est la présence de contradictions, qui peuvent compromettre la cohérence et la fiabilité des analyses. Pour surmonter ce problème, le cadre proposé inclut un mécanisme de gestion des contradictions, composé d'un extracteur de contradictions \( E \) et d'un mécanisme de réparation. Ces outils permettent de détecter, résoudre et gérer les incohérences tout en préservant la structure hiérarchique des connaissances.

\subsubsection{Définition de l'extracteur de contradictions}
\label{subsubsec:extractor_def}

L'extracteur de contradictions \( E \) est un opérateur logique qui identifie les propriétés contradictoires présentes dans une entité de connaissance \( K \). Formellement, on définit :

\begin{definition}[Extracteur de contradictions]
\label{def:extracteur_contradictions}
Un extracteur de contradictions est un opérateur logique \( E \) tel que :
\begin{gather}\label{eq:extract_contra}
    E(K) = \{ p_i \in K \mid \exists q_j \in K\\ \nonumber
    \text{ tels que } (p_i \land \neg q_j) \lor (\neg p_i \land q_j) \}.
\end{gather}
\end{definition}
L'extracteur de contradictions \( E \) identifie les propriétés \( p_i \) dans \( K \) qui sont en contradiction avec d'autres propriétés \( q_j \) de la même entité. Cela permet de localiser les sources de conflits dans les bases de connaissances. Par exemple, si une entité \( K \) contient à la fois \( p \) et \( \neg p \), alors \( p \) sera identifié comme contradictoire par \( E \).

\subsubsection{Entités réparables}
\label{subsubsec:entites_reparables}

Une entité de connaissance peut être réparée si certaines contradictions peuvent être résolues sans affecter les autres propriétés. Cette notion est formalisée comme suit :

\begin{definition}[Entité réparable]
\label{def:entite_reparable}
Une entité \( K \) est réparable si \( \exists R \subseteq E(K) \) tel que :
\begin{equation}
    \neg \text{Contradictoire}(K \setminus R).
\end{equation}
\end{definition}
Une entité \( K \) est réparable si elle contient un sous-ensemble \( R \) de contradictions qui, une fois supprimé ou résolu, permet de rendre \( K \) cohérente. Cela garantit que les bases de connaissances peuvent être maintenues cohérentes, même en présence de conflits. Par exemple, si une entité \( K \) contient une contradiction \( (p, \neg p) \) mais que la suppression de \( \neg p \) suffit à rendre \( K \) cohérente, alors \( K \) est réparable.

\subsubsection{Mécanisme de réparation minimale}
\label{subsubsec:reparation_minimale}

Le mécanisme de réparation vise à identifier le sous-ensemble minimal de contradictions à résoudre pour rendre une entité cohérente. Cela est formalisé par le corollaire suivant :

\begin{corollary}[Réparation minimale]
\label{corol:reparation_minimale}
La réparation minimale \( R_{\min} \) est définie comme :
\begin{equation}
    R_{\min} = \arg\min_{R \subseteq E(K)} \left( \|R\| \mid \neg \text{Contradictoire}(K \setminus R) \right).
\end{equation}
\end{corollary}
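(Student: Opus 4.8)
Le plan est de lire ce corollaire non pas comme une simple définition, mais comme un énoncé de bonne définition : pour toute entité réparable $K$ au sens de la Définition~\ref{def:entite_reparable}, le minimiseur $R_{\min}$ existe, c'est-à-dire que le problème d'optimisation atteignant la cardinalité minimale est réalisable et admet une solution. Tout l'argument se ramène à une observation de finitude, et je le construirais donc directement sur les définitions précédentes.

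Je commencerais par introduire l'ensemble réalisable
\[
    \mathcal{R}(K) = \{\, R \subseteq E(K) \mid \neg \text{Contradictoire}(K \setminus R) \,\}
\]
et montrer qu'il est non vide. Cela découle immédiatement de la Définition~\ref{def:entite_reparable} : dire que $K$ est réparable signifie précisément qu'il existe $R \subseteq E(K)$ vérifiant $\neg \text{Contradictoire}(K \setminus R)$, de sorte que ce témoin appartient à $\mathcal{R}(K)$. Ensuite, j'invoquerais l'hypothèse de modélisation de base selon laquelle chaque entité est un ensemble fini de littéraux clos ; par conséquent $E(K) \subseteq K$ est fini et son ensemble des parties est fini, ce qui fait de $\mathcal{R}(K)$ une collection finie et non vide de sous-ensembles.

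J'appliquerais alors cette finitude à la fonction objectif. L'application $R \mapsto \|R\|$ envoie $\mathcal{R}(K)$ sur un sous-ensemble fini et non vide des entiers naturels, lequel, par le bon ordre, possède un plus petit élément $k_{\min}$. Tout $R \in \mathcal{R}(K)$ tel que $\|R\| = k_{\min}$ réalise alors l'$\arg\min$, établissant l'existence de $R_{\min}$. C'est ce contenu qui rend la définition légitime, puisqu'il garantit que l'infimum est atteint et non seulement approché.

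Le principal obstacle — en réalité le seul point délicat — concerne l'unicité plutôt que l'existence : l'$\arg\min$ peut contenir plusieurs réparations minimales distinctes de même cardinalité $k_{\min}$, si bien que le symbole $R_{\min}$ assorti d'un signe d'égalité n'est pas canoniquement déterminé. Je traiterais ce point en énonçant le résultat comme l'existence et la bonne définition de la cardinalité minimale $k_{\min}$, tout en considérant $R_{\min}$ comme un élément quelconque de l'ensemble (éventuellement non réduit à un singleton) des minimiseurs ; si un choix canonique est requis, on peut fixer un ordre total sur $\text{Lit}(\mathcal{L})$ et prendre le minimiseur lexicographiquement le plus petit. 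Aucun argument plus profond n'est nécessaire, le problème étant purement combinatoire sur un domaine fini.
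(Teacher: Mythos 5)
Votre lecture du corollaire comme un énoncé de bonne définition est correcte, et votre argument est complet, mais il diffère sensiblement de celui du papier. La preuve du papier (donnée sous l'étiquette \emph{Réparation minimale} dans la section des propriétés théoriques) tient en une phrase : elle invoque la Proposition~\ref{prop:repairability} et affirme que \( R_{\min} \) est le plus petit sous-ensemble de \( E(K) \) résolvant toutes les contradictions — sans jamais établir que ce minimum est atteint, ni d'où vient la non-vacuité de l'ensemble réalisable, ni ce qui se passe en cas de minimiseurs multiples. Vous comblez précisément ces trois lacunes : la non-vacuité de \( \mathcal{R}(K) \) découle de la Définition~\ref{def:entite_reparable} (ce qui rend explicite l'hypothèse de réparabilité, tacite dans le papier — sans elle l'\(\arg\min\) n'est tout simplement pas défini), la finitude de \( E(K) \subseteq K \) héritée de la modélisation des entités comme ensembles finis de littéraux clos garantit l'atteinte du minimum par bon ordre de \( \mathbb{N} \), et votre remarque sur la non-unicité (avec le choix lexicographique optionnel) identifie un vrai défaut de la notation \( R_{\min} = \arg\min(\cdot) \) que le papier passe sous silence. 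En somme, le papier achète la brièveté en s'appuyant sur la Proposition~\ref{prop:repairability}, qui ne fournit en réalité que l'existence d'un \emph{témoin} réalisable et non la minimalité ; votre argument combinatoire élémentaire est plus solide et serait la preuve à retenir si l'on voulait rendre le corollaire rigoureux.
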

La réparation minimale \( R_{\min} \) est le sous-ensemble de contradictions le plus petit qui, une fois résolu, permet de rendre \( K \) cohérente. Cela garantit que le nombre de modifications apportées à \( K \) est minimisé, ce qui est essentiel pour préserver l'intégrité des connaissances. Par exemple, si une entité \( K \) contient plusieurs contradictions, la réparation minimale \( R_{\min} \) consiste à résoudre uniquement celles qui sont nécessaires pour rendre \( K \) cohérente.

\subsubsection{Impact sur la mesure de similarité paraconsistante}
\label{subsubsec:impact_similarity}

L'introduction de l'extracteur de contradictions et du mécanisme de réparation permet de définir une version améliorée de la mesure de similarité paraconsistante, qui tient compte des réparations effectuées :

\begin{definition}[Similarité paraconsistante avec réparations]
\label{def:similarity_reparation}
Soient \( K_1 \) et \( K_2 \) deux entités de connaissances. La similarité paraconsistante avec réparations \( \Xi_{RP}(K_1, K_2) \) est définie comme :
\begin{equation}
    \Xi_{RP}(K_1, K_2) = \Xi_P(K_1', K_2'),
\end{equation}
où \( K_1' \) et \( K_2' \) sont les versions réparées de \( K_1 \) et \( K_2 \), respectivement, obtenues en résolvant les contradictions à l'aide de \( E \).
\end{definition}
La similarité paraconsistante avec réparations \( \Xi_{RP} \) permet de comparer les entités de connaissances après que les contradictions ont été résolues. Cela garantit une évaluation plus précise et cohérente de la similarité, même en présence de conflits. Par exemple, si \( K_1 \) et \( K_2 \) contiennent des contradictions, la similarité \( \Xi_{RP}(K_1, K_2) \) est calculée sur les versions réparées \( K_1' \) et \( K_2' \), ce qui réduit l'impact des contradictions sur la mesure de similarité.

\subsubsection{Propriétés théoriques}
\label{subsubsec:theoretical_properties}

Le cadre proposé garantit plusieurs propriétés théoriques essentielles pour assurer la cohérence et la robustesse des évaluations de similarité, même en présence de contradictions.

\begin{proposition}[Existence de l'extracteur de contradictions]
\label{prop:existence_extractor}
Pour toute entité de connaissance \( K \), il existe un extracteur de contradictions \( E(K) \).
\end{proposition}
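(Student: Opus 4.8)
Le plan est de traiter cet énoncé comme une affirmation de \emph{bonne définition} : puisque la Définition~\ref{def:extracteur_contradictions} fournit directement $E(K)$ sous forme d'une notation en compréhension, il suffit de montrer que cette notation détermine un unique sous-ensemble bien défini de $K$, lequel existe alors sans hypothèse supplémentaire. Je m'appuierais d'abord sur la Section~\ref{subsec:def_entities}, qui stipule que toute entité $K = \{\ell_1, \ldots, \ell_m\}$ est un ensemble \emph{fini} de littéraux clos de $\mathcal{L}$.

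Les étapes principales, dans l'ordre, seraient les suivantes. D'abord, j'observerais que le prédicat définissant l'appartenance à $E(K)$ porte sur des couples $(p_i, q_j)$ d'éléments de $K$ et que, les littéraux étant clos (donc sans variables libres), la condition de contradiction $(p_i \land \neg q_j) \lor (\neg p_i \land q_j)$ se réduit à un test syntaxique décidable — typiquement la vérification que $q_j$ coïncide avec $\neg p_i$ à complémentation près. Ensuite, pour chaque $p_i \in K$, le quantificateur existentiel $\exists q_j \in K$ parcourt l'ensemble fini $K$ ; le test d'existence se ramène donc à une disjonction finie de conditions décidables, elle-même décidable. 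Enfin, je conclurais que l'ensemble
\begin{equation*}
E(K) = \{\, p_i \in K \mid \exists q_j \in K \text{ tel que } (p_i \land \neg q_j) \lor (\neg p_i \land q_j) \,\}
\end{equation*}
est un sous-ensemble de $K$ uniquement déterminé, donc existe (avec $E(K) = \emptyset$ dans le cas où $K$ est cohérente, cas qui reste parfaitement admissible).

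Le principal obstacle n'est pas de nature mathématique profonde mais plutôt de justification : il faut s'assurer que la condition de contradiction est effectivement un prédicat bien défini et décidable sur les littéraux clos, faute de quoi la notation en compréhension serait vide de sens. Je clarifierais donc ce point en explicitant que, sur des littéraux atomiques clos et leurs négations, « être en contradiction » est une relation purement syntaxique (la coexistence de $p$ et $\neg p$), ce qui garantit à la fois la décidabilité et l'indépendance vis-à-vis d'une quelconque interprétation sémantique. Le reste — finitude de $K$, filtrage explicite élément par élément, unicité du sous-ensemble obtenu — découle immédiatement des définitions et ne requiert aucun calcul non trivial.
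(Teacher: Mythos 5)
Votre preuve est correcte et suit essentiellement la même voie que celle de l'article : l'existence de \( E(K) \) y est aussi obtenue comme une simple conséquence de la bonne définition de l'ensemble en compréhension donné par la Définition~\ref{def:extracteur_contradictions}. Vous ne faites qu'expliciter ce que l'article laisse implicite derrière l'appel à l'expressibilité en logique du premier ordre, à savoir la finitude de \( K \) et le caractère purement syntaxique, donc décidable, du test de contradiction entre littéraux clos — un raffinement bienvenu mais pas une route différente.
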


\begin{proof}
D'après la définition \ref{def:extracteur_contradictions}, \( E(K) \) est construit en identifiant des paires de propriétés \( p_i \) et \( q_j \) telles que \( p_i \land \neg q_j \) ou \( \neg p_i \land q_j \). Puisque la logique du premier ordre (FOL) garantit que les propriétés peuvent être exprimées logiquement, \( E(K) \) est bien défini.
\end{proof}

\begin{proposition}[Réparabilité d'une entité]
\label{prop:repairability}
Une entité de connaissance \( K \) est réparable si et seulement si \( E(K) \neq K \).
\end{proposition}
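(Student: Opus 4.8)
The plan is to prove the equivalence by treating each implication separately, anchoring everything on the structural fact that the extractor never returns more than its input. Directly from Definition~\ref{def:extracteur_contradictions}, every element of $E(K)$ is by construction an element of $K$, so $E(K) \subseteq K$ always holds. Hence the condition $E(K) \neq K$ is equivalent to the existence of at least one literal $\ell \in K$ with $\ell \notin E(K)$, i.e.\ a literal that participates in no contradiction. First I would record this observation, then characterize the \emph{consistent core} $K \setminus E(K)$ as exactly the set of literals of $K$ whose complementary literal is absent from $K$; by this very description the core is itself non-contradictory, so $\neg\text{Contradictoire}(K \setminus E(K))$ holds unconditionally.

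For the direction $E(K) \neq K \Rightarrow K$ repairable, I would exhibit an explicit repair. Choosing $R = E(K)$ satisfies $R \subseteq E(K)$ as required by Definition~\ref{def:entite_reparable}, and then $K \setminus R = K \setminus E(K)$ is the consistent core, which is non-empty precisely because $E(K) \neq K$ and contradiction-free by the characterization above. Thus $\neg\text{Contradictoire}(K \setminus R)$, and $K$ is repairable. The substantive point to make explicit here is that a genuine repair must leave a non-empty, contradiction-free remainder, and this is exactly what $E(K) \neq K$ supplies.

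For the converse, $K$ repairable $\Rightarrow E(K) \neq K$, I would argue by contraposition. Assuming $E(K) = K$, every literal of $K$ is involved in some contradiction, so the consistent core $K \setminus E(K)$ is empty; no choice of $R \subseteq E(K)$ can then yield a meaningful (non-empty) contradiction-free restriction of $K$, and $K$ fails to be repairable. Combined with the first implication this closes the equivalence.

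I expect the main obstacle to lie in this converse, and more precisely in pinning down the intended reading of Definition~\ref{def:entite_reparable}. Under a purely literal reading the admissible choice $R = E(K)$ already produces the non-contradictory (possibly empty) set $K \setminus E(K)$ for \emph{every} $K$, which would make repairability trivially always true and break the stated equivalence when $E(K)=K$. The clean fix is to adopt the convention that a repair must preserve a non-empty consistent remainder — equivalently, that the empty restriction is not an admissible repair. Once this convention is fixed, the whole statement reduces to the set-theoretic identity $E(K) \subseteq K$ together with the fact that $K \setminus E(K) \neq \emptyset$ iff $E(K) \neq K$, and the remaining verifications are routine.
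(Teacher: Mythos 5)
Your proof follows essentially the same route as the paper's own: both directions rest on the observation that \( E(K) \subseteq K \), so that \( E(K) \neq K \) is equivalent to the existence of a literal involved in no contradiction, the consistent core \( K \setminus E(K) \) serving as the explicit repair, while \( E(K) = K \) means every literal is contradicted and no such core survives. The paper's proof is exactly this two-line argument, stated more informally.

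Where you go beyond the paper is in flagging the vacuous-repair loophole, and you are right to do so: under a purely literal reading of Définition~\ref{def:entite_reparable}, the choice \( R = E(K) \) is always admissible and \( K \setminus R = K \setminus E(K) \) is never contradictory --- even when it is empty --- so \emph{every} entity would be trivially réparable and the ``seulement si'' direction of Proposition~\ref{prop:repairability} would fail precisely in the case \( E(K) = K \). The paper's proof glosses over this: its claim that \( E(K) = K \) renders \( K \) ``irréparable'' is only valid under your convention that a repair must leave a non-empty consistent remainder, a convention the paper uses implicitly but never states. Making it explicit, as you do, is the correct patch; your writeup is therefore not a deviation from the paper's argument but a more careful version of it, and the rest of your verifications (the characterization of the core, the contrapositive for the converse) match the intended proof step for step.
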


\begin{proof}
Si \( E(K) = K \), alors chaque propriété de \( K \) est impliquée dans une contradiction, rendant \( K \) irréparable. Inversement, si \( E(K) \neq K \), il existe au moins une propriété non impliquée dans une contradiction, permettant l'existence d'un sous-ensemble cohérent \( K \setminus R \).
\end{proof}

\begin{theorem}[Préservation des catégories de similarité]
\label{thm:preservation_similarities}
La similarité paraconsistante avec réparations \( \Xi_{RP}(K_1, K_2) \) préserve les catégories de similarité originales \( K' \approx \), \( K' \neq \) après réparation de \( K_1 \) et \( K_2 \).
\end{theorem}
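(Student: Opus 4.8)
Le plan est d'abord de fixer précisément le sens des catégories $\approx$ (couples d'entités similaires) et $\neq$ (couples divergents), car la notation de l'énoncé reste informelle. Je les définirai via le seuil $\theta$ des super-catégories (Définition~\ref{def:super_categories}) : un couple $(K_1, K_2)$ relève de la catégorie $\approx$ lorsque $S^*(K_1, K_2) > \theta$, et de la catégorie $\neq$ sinon. L'objectif est alors de montrer que l'application de réparation $K_i \mapsto K_i'$ préserve cette affectation, c'est-à-dire que le couple réparé $(K_1', K_2')$, évalué par $\Xi_{RP}(K_1, K_2) = \Xi_P(K_1', K_2')$, tombe dans la même catégorie que $(K_1, K_2)$ évalué par $S^*$. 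Il s'agit donc d'une comparaison de la classification \emph{avant} et \emph{après} réparation, et non d'une simple identité de mesures sur les entités déjà réparées.

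Ensuite, j'analyserais l'effet de la réparation sur les trois ensembles de propriétés. En invoquant le Corollaire~\ref{corol:reparation_minimale}, la réparation minimale $R_{\min} \subseteq E(K)$ ne retire que des littéraux impliqués dans une contradiction ; la Proposition~\ref{prop:repairability} garantit qu'un noyau cohérent subsiste dès que $E(K) \neq K$, de sorte que toute propriété partagée non contradictoire est conservée dans $K_i'$. Il en résulte que $P_{\text{contradictory}}(K_1', K_2') = \emptyset$, donc $D^\pm(K_1', K_2') = 0$, et par conséquent $\Xi_{RP}(K_1, K_2) = \Xi_P(K_1', K_2') = S^+(K_1', K_2')$. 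Le cœur de la preuve consiste alors à confronter cette quantité à la décomposition $S^*(K_1, K_2) = S^+(K_1, K_2) - D^\pm(K_1, K_2)$ de l'équation~\eqref{eq:simi_para}, en traitant séparément chacune des deux catégories.

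Pour la catégorie $\approx$, il suffit de montrer que la réparation n'abaisse pas le score : comme elle supprime le terme de divergence $D^\pm$ tout en conservant le noyau partagé qui détermine $S^+$, on obtiendra $\Xi_{RP}(K_1, K_2) \geq S^*(K_1, K_2) > \theta$, ce qui maintient l'appartenance à $\approx$. La direction délicate est la catégorie $\neq$ : il faut vérifier que la résolution des contradictions ne propulse pas un couple réellement divergent au-dessus de $\theta$. L'argument clé est un lemme de cohérence établissant que l'appartenance à $\neq$ est attestée par un déficit de propriétés partagées non contradictoires — témoin qui survit à la réparation — et non par la seule présence de contradictions ; supprimer celles-ci ne peut donc pas fabriquer de propriétés partagées absentes à l'origine.

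Le principal obstacle sera de rendre rigoureux ce dernier point, car la réparation modifie simultanément le numérateur $|P_{\text{shared}}|$ et le dénominateur $|P_{\text{total}}|$, et l'invariance de la comparaison à $\theta$ n'est pas immédiate. Je le surmonterais par un lemme de monotonie montrant que la réparation ne diminue jamais le rapport des propriétés partagées au total, combiné au lemme de cohérence ci-dessus : le premier contrôle le passage $\approx \to \approx$, le second interdit le passage parasite $\neq \to \approx$. La conjonction des deux donnerait la préservation dans les deux sens et conclurait le théorème.
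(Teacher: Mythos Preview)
Your plan is considerably more elaborate than the paper's own argument. The paper's proof consists of two sentences: it invokes Definition~\ref{def:similarity_reparation} and then simply asserts that repairs resolve contradictions without altering the shared properties, concluding directly that the categories remain consistent with the original $\Xi_P(K_1,K_2)$. There is no threshold interpretation, no separation of the $\approx$ and $\neq$ directions, and no auxiliary lemma; the paper treats the preservation of shared properties as self-evident and stops there.

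Your threshold reading via $\theta$ is a legitimate sharpening of the vague statement, and you are right that the $\neq \to \approx$ direction is the crux. But the \emph{lemme de cohérence} you rely on --- that membership in $\neq$ is always witnessed by a deficit of shared non-contradictory properties surviving repair --- fails in general. Take $K_1=\{p_1,p_2,p_3\}$, $K_2=\{p_1,\neg p_2,\neg p_3\}$ and $\theta=0.3$: then $S^*(K_1,K_2)=\tfrac{1}{5}-\tfrac{2}{5}=-0.2<\theta$, so the pair lies in $\neq$. Any minimal repair removing the two contradictory literals (say $K_2'=\{p_1\}$) yields $\Xi_{RP}(K_1,K_2)=S^+(K_1,K_2')=\tfrac{1}{3}>\theta$, pushing the pair into $\approx$. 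Here the original $\neq$ classification is caused \emph{entirely} by contradictions, so no surviving witness exists; and your monotonicity lemma, far from helping, is precisely what drives the unwanted transition, since shrinking $|P_{\text{total}}|$ raises the ratio. The paper avoids this obstruction only by never committing to a threshold semantics; under your sharper formalisation the theorem would require an additional hypothesis (for instance a bound relating $\theta$ to the fraction of contradictory properties) that neither you nor the paper supplies.
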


\begin{proof}
D'après la définition \ref{def:similarity_reparation}, \( \Xi_{RP}(K_1, K_2) \) est calculée sur les versions réparées \( K_1' \) et \( K_2' \). Puisque les réparations résolvent les contradictions sans altérer les propriétés partagées, les catégories de similarité restent cohérentes avec l'original \( \Xi_P(K_1, K_2) \).
\end{proof}

\begin{theorem}[Cohérence après réparation]
\label{thm:coherence_after_repair}
Si \( S \) est une super-catégorie dans \( \Xi_K \), alors la réparation de \( S \) assure la cohérence dans toutes les sous-catégories.
\end{theorem}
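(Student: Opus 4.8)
L'idée est d'abord de fixer un sens précis aux objets manipulés par l'énoncé, car « la réparation de $S$ » comme « la cohérence dans les sous-catégories » sont formulées de manière informelle. Je lirais une super-catégorie $S \in \Xi_K$ comme une famille finie $S = \{K_{i_1},\dots,K_{i_k}\} \subseteq \mathcal{K}$ regroupée par le seuillage de la Définition \ref{def:super_categories}, et une sous-catégorie comme tout raffinement $S' \subseteq S$ obtenu en resserrant le seuil, c'est-à-dire $S' = \{K \in S \mid S^*(K,K') > \theta' \text{ pour tout } K' \neq K\}$ avec $\theta' \geq \theta$ ; la monotonie de l'inégalité définissante en $\theta$ garantit $S' \subseteq S$. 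La réparation de $S$ serait prise comme l'application simultanée de la réparation minimale du Corollaire \ref{corol:reparation_minimale} à chaque membre, produisant $\{K \setminus R_{\min}(K) \mid K \in S\}$.

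Avec ces conventions, la preuve se scinde en deux étapes. D'abord, j'invoquerais le Corollaire \ref{corol:reparation_minimale} et la Définition \ref{def:entite_reparable} pour obtenir que chaque membre réparé $K \setminus R_{\min}(K)$ satisfait $\neg\text{Contradictoire}(\cdot)$, de sorte que toute entité de la super-catégorie réparée est individuellement cohérente. Ensuite, j'établirais le lemme structurel clé selon lequel la cohérence est héritée vers le bas : si $\neg\text{Contradictoire}(X)$ et $Y \subseteq X$, alors $\neg\text{Contradictoire}(Y)$, car l'extracteur $E$ de la Définition \ref{def:extracteur_contradictions} ne détecte que des paires conflictuelles simultanément présentes, si bien que retirer des littéraux (ou des entités) ne peut jamais créer une nouvelle contradiction. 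En combinant les deux étapes, toute sous-catégorie, étant une sous-famille d'entités déjà réparées et cohérentes, reste cohérente.

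Le principal obstacle que j'anticipe est de trancher quelle notion de cohérence l'énoncé vise, et de se prémunir contre le cas où l'argument naïf échoue. Si « la cohérence d'une super-catégorie » désigne la cohérence de l'union $\bigcup_{K \in S} K$ plutôt que la cohérence de chaque membre pris séparément, alors réparer les entités indépendamment ne suffit pas : deux entités réparées distinctes peuvent encore apporter $p$ et $\neg p$ à l'union. Pour gérer cela, je renforcerais l'étape de réparation en calculant $R_{\min}$ sur l'ensemble mutualisé des littéraux de toute la super-catégorie, c'est-à-dire en appliquant l'extracteur et la machinerie de réparation minimale à $\bigcup_{K \in S} K$ plutôt que membre par membre ; le lemme d'héritage vers le bas fournit alors gratuitement la cohérence de l'union de toute sous-catégorie. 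Enfin, je m'appuierais sur le Théorème \ref{thm:preservation_similarities} pour argumenter que cette réparation globale ne perturbe pas le regroupement par similarité, de sorte que la hiérarchie des sous-catégories reste bien définie sur les entités réparées et que la conclusion se transfère à $\Xi_K$.
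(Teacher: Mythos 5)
Your proposal is correct, and its kernel is the same as the paper's own (very terse) argument, but you supply the rigor the paper omits. The paper's proof is two sentences: repairing \( S \) means resolving the contradictions in its \emph{constitutive properties} — i.e., the pooled reading you converge to in your third paragraph — and since sub-categories inherit the properties of \( S \), coherence propagates down the hierarchy. You prove that inheritance step as an explicit lemma (\( \neg\text{Contradictoire} \) is downward closed under \( \subseteq \), because \( E \) only flags complementary pairs that are simultaneously present), where the paper merely asserts it. More importantly, you identify a genuine ambiguity the paper glosses over: per-member minimal repair does \emph{not} yield coherence of the union, since two individually repaired entities can still contribute \( p \) and \( \neg p \); your fix — running \( E \) and \( R_{\min} \) on \( \bigcup_{K \in S} K \) — is exactly what ``propriétés constitutives de \( S \)'' must mean for the paper's two-sentence proof to go through, so your strengthened version is the faithful formalization rather than a detour. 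Your closing appeal to Theorem \ref{thm:preservation_similarities} to keep the threshold-based sub-category hierarchy well defined after repair mirrors the paper's Corollaire \ref{corol:structure_preservation} (which cites this very theorem), and is the right precaution in a framework where repair demonstrably changes \( S^* \) values (cf.\ the \( K_2' \) example, where \( S^* \) moves from \( -0.2 \) to \( 0.25 \)). One loose end, shared with the paper: both the per-member and the pooled repairs presuppose repairability, which by Proposition \ref{prop:repairability} can fail (e.g.\ when \( E(K) = K \), or when every literal of the pooled set participates in some contradiction); a fully airtight statement should exclude this degenerate case by hypothesis, which costs one sentence.
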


\begin{proof}
D'après la définition \ref{def:similarity_reparation}, la réparation de \( S \) implique la résolution des contradictions dans ses propriétés constitutives. Puisque les sous-catégories héritent des propriétés de \( S \), la réparation de \( S \) assure la cohérence à tous les niveaux de la hiérarchie.
\end{proof}

\begin{corollary}[Réparation minimale]
\label{corol:minimal_repair}
La réparation minimale \( R_{\min} \) satisfait :
\begin{equation}
R_{\min} = \arg\min_{R \subseteq E(K)} \left( \|R\| \mid \neg \text{Contradictoire}(K \setminus R) \right).    
\end{equation}
\end{corollary}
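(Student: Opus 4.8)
The plan is to establish that the expression defining $R_{\min}$ is well-posed, i.e.\ that the $\arg\min$ ranges over a finite, nonempty feasible set and therefore attains its minimum. This reduces the corollary to an existence-of-minimizer argument resting on the finiteness hypothesis for entities together with the reparability result of Proposition~\ref{prop:repairability}. I would note first that, since the present statement merely re-expresses the mechanism of Corollary~\ref{corol:reparation_minimale}, the formula itself is not in question; what must be justified is that the object it denotes genuinely exists.

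First I would invoke the definition of entities from Section~\ref{subsec:def_entities}: each $K$ is a \emph{finite} set of closed literals, hence $E(K) \subseteq K$ is finite and its power set $\mathcal{P}(E(K))$ is finite as well. Consequently the optimization is a minimization of the integer-valued map $R \mapsto \|R\|$ over a subset of a finite collection, and such a minimization attains its minimum as soon as the feasible region is nonempty.

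Second I would show that the feasible region $\mathcal{F} = \{\, R \subseteq E(K) \mid \neg\text{Contradictoire}(K \setminus R)\,\}$ is nonempty. The natural witness is $R = E(K)$ itself: by Definition~\ref{def:extracteur_contradictions}, $E(K)$ collects exactly the literals engaged in a contradiction, so $K \setminus E(K)$ retains only conflict-free literals and is therefore non-contradictory, giving $E(K) \in \mathcal{F}$. Equivalently, whenever $K$ is reparable in the sense of Definition~\ref{def:entite_reparable}, Proposition~\ref{prop:repairability} directly guarantees some $R \subseteq E(K)$ in $\mathcal{F}$. Combining the two steps, $\mathcal{F}$ is finite and nonempty, the value $\min_{R \in \mathcal{F}} \|R\|$ is attained, the set of minimizers is nonempty, and defining $R_{\min}$ as such a minimizer yields the stated identity.

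The main obstacle I anticipate is not existence but \emph{uniqueness}: distinct subsets of $E(K)$ may share the same minimal cardinality while each restoring coherence, so $\arg\min$ is in general a set rather than a single element. I would address this by reading $R_{\min}$ as any representative of this minimizer set, or, if a canonical choice is required, by fixing a tie-breaking order on $\text{Lit}(\mathcal{L})$. I would further observe that the similarity-relevant conclusions, such as those of Theorem~\ref{thm:preservation_similarities}, depend only on the coherence of $K \setminus R_{\min}$ and on the preserved shared properties, and hence are insensitive to which minimizer is selected.
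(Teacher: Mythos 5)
Your proof is correct, but it takes a genuinely more developed route than the paper's own argument, which disposes of this corollary in a single sentence: it invokes Proposition~\ref{prop:repairability} and simply asserts that \( R_{\min} \) is the smallest subset of \( E(K) \) resolving all contradictions --- essentially a restatement of the formula, with existence of a minimizer left entirely implicit. You instead read the statement as a well-posedness claim and actually prove it: finiteness of \( K \) (Section~\ref{subsec:def_entities}) gives finiteness of the search space \( \mathcal{P}(E(K)) \), the witness \( R = E(K) \) (or, alternatively, the reparability hypothesis via Proposition~\ref{prop:repairability}) gives a nonempty feasible region, and an integer-valued map on a finite nonempty set attains its minimum. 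This buys rigor the paper lacks, and your observation that \( \arg\min \) is in general a \emph{set} of minimizers --- so that \( R_{\min} \) requires either a representative reading or a tie-breaking order on \( \text{Lit}(\mathcal{L}) \) --- is a genuine point the paper never addresses, as is your remark that downstream results such as Theorem~\ref{thm:preservation_similarities} are insensitive to the choice of minimizer. One caution: your primary witness \( R = E(K) \) makes \emph{every} entity trivially reparable, since \( K \setminus E(K) \) (possibly empty) contains no complementary pair; this conflicts with the ``only if'' direction of the paper's Proposition~\ref{prop:repairability}, which declares \( K \) irreparable when \( E(K) = K \). That is a tension internal to the paper rather than an error of yours, and your fallback argument through the reparability hypothesis covers the case where one insists on the paper's convention, but it would strengthen the write-up to flag explicitly that your two nonemptiness arguments are not equivalent under that convention.
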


\begin{proof}
D'après la proposition \ref{prop:repairability}, \( R_{\min} \) est le plus petit sous-ensemble de \( E(K) \) qui résout toutes les contradictions, assurant la réparabilité.
\end{proof}

\begin{corollary}[Preservation de la structure hiérarchique]
\label{corol:structure_preservation}
La réparation d'une super-catégorie \( S \) n'affecte pas la structure hiérarchique de \( \Xi_K \).
\end{corollary}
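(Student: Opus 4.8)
L'idée est de montrer que la relation d'appartenance qui définit $\Xi_K$ --- à savoir $S^*(K_i, K_j) > \theta$ pour tout $j \neq i$, selon la Définition \ref{def:super_categories} --- reste invariante sous l'effet de la réparation de $S$, de sorte que la partition de $\mathcal{K}$ en super-catégories demeure inchangée. Puisque la structure hiérarchique est entièrement encodée par cette partition associée au seuil $\theta$, l'invariance de l'appartenance équivaut à l'invariance de la structure.

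Je rappellerais d'abord la décomposition $S^* = S^+ - D^\pm$ de la Définition \ref{def:Sstar} et j'observerais que la réparation d'une entité $K$ consiste à retirer un sous-ensemble minimal $R_{\min} \subseteq E(K)$ de littéraux contradictoires (Corollaire \ref{corol:minimal_repair}). Par construction, $R_{\min}$ ne contient que des littéraux impliqués dans des contradictions : les propriétés partagées $P_{\text{shared}}$, et donc la composante positive $S^+$, restent intactes, tandis que la masse contradictoire $D^\pm$ ne peut que diminuer. C'est précisément le contenu invoqué dans la preuve du Théorème \ref{thm:preservation_similarities} : les réparations résolvent les contradictions sans altérer les propriétés partagées, de sorte que les catégories de similarité $K' \approx$ et $K' \neq$ sont préservées.

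J'invoquerais ensuite le Théorème \ref{thm:coherence_after_repair}, qui garantit que la réparation d'une super-catégorie $S$ propage la cohérence à toutes ses sous-catégories. Combiné à la préservation des catégories de similarité, ceci assure que chaque classification par paires utilisée pour construire $\Xi_K$ via la Définition \ref{def:super_categories} reste stable : les entités qui satisfaisaient $S^*(K_i, K_j) > \theta$ continuent de le faire, et la disjonction garantie par le Théorème \ref{thm:disjonction_super_cats} est conservée. Par conséquent, l'appartenance de chaque $K_i$ à sa super-catégorie, ainsi que la séparation entre super-catégories distinctes, sont toutes deux préservées.

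La principale difficulté réside dans l'étape intermédiaire : il faut vérifier que la résolution des contradictions ne peut pas faire franchir le seuil $\theta$ à un score de similarité, ce qui fusionnerait deux super-catégories auparavant distinctes. Naïvement, $S^*$ est non décroissante sous la réparation (puisque $S^+$ est fixe et que $D^\pm$ ne peut que diminuer), ce qui autoriserait à lui seul une paire sous le seuil à le dépasser. La résolution consiste à s'appuyer sur le Théorème \ref{thm:preservation_similarities} plutôt que sur la seule monotonie : comme la réparation préserve la classification de similarité, une paire initialement classée comme dissemblable ($K' \neq$) reste dissemblable, donc du côté $\leq \theta$ du seuil, tandis qu'une paire classée comme semblable ($K' \approx$) demeure au-dessus. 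C'est ce qui confine le changement éventuel de $S^*$ à l'intérieur de chaque catégorie d'origine et garantit la préservation de la partition, et donc de la hiérarchie.
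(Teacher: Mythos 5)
Votre diagnostic de la difficulté centrale est exact, mais la résolution que vous proposez comporte une lacune réelle. Vous vous appuyez sur le Théorème \ref{thm:preservation_similarities} pour exclure qu'une paire initialement sous le seuil $\theta$ le franchisse après réparation ; or ce théorème, tel qu'énoncé et démontré dans l'article, affirme seulement que les réparations résolvent les contradictions \emph{sans altérer les propriétés partagées} — il ne dit rien de la position du score $S^*$ par rapport à un seuil arbitraire. Pire, l'article lui-même fournit un contre-exemple à l'invariance que vous lui faites porter : à l'Étape~4 de l'Exemple \ref{ex:hierarchie}, on a $S^*(K_1,K_2)=-0{,}2$ avant réparation et $S^*(K_1,K_2')=0{,}25$ après, soit un changement de signe ; la paire reste du bon côté uniquement parce que $\theta=0{,}4$ dans cet exemple. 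Notez aussi une imprécision dans votre deuxième paragraphe : vous écrivez que la composante $S^+$ ``reste intacte'', mais seul le numérateur $|P_{\text{shared}}|$ est invariant ; comme la réparation retire des littéraux, $|P_{\text{total}}|$ diminue et $S^+$ \emph{augmente} (de $1/5$ à $1/4$ dans ce même exemple), ce qui aggrave précisément le risque de franchissement du seuil que vous cherchez à écarter. Votre argument est donc circulaire au point critique : il suppose que ``catégorie de similarité préservée'' signifie ``même côté du seuil $\theta$'', ce que ni l'énoncé ni la preuve du théorème invoqué n'établissent.

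Cela dit, il faut reconnaître que la preuve de l'article est encore plus sommaire que la vôtre : elle se réduit à une phrase invoquant le Théorème \ref{thm:coherence_after_repair} (la réparation de $S$ assure la cohérence des sous-catégories, donc préserve l'organisation hiérarchique), sans jamais considérer la possibilité de fusion de super-catégories que vous avez correctement identifiée. Votre tentative est donc plus lucide que l'argument original — ramener la préservation de la hiérarchie à l'invariance de la relation $S^*(K_i,K_j)>\theta$ de la Définition \ref{def:super_categories} est la bonne réduction, et le point sensible que vous repérez est le vrai — mais le trou n'est refermé ni par votre preuve ni par celle de l'article : en l'état, la préservation de la partition ne vaut que relativement à un seuil suffisamment grand, ou moyennant une hypothèse explicite du type ``aucune paire réparée ne franchit $\theta$'', hypothèse qu'il faudrait ajouter à l'énoncé du corollaire pour que l'argument soit complet.
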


\begin{proof}
D'après le théorème \ref{thm:coherence_after_repair}, la réparation de \( S \) assure la cohérence dans les sous-catégories, préservant l'organisation hiérarchique de \( \Xi_K \).
\end{proof}

\subsection{Exemples}
\label{subsec:exemples_cadre}

Nous illustrons ici le fonctionnement du cadre avec deux exemples concrets :

\begin{example}[Contradictions simples]
\label{ex:contradiction_simple}
Soient :
\begin{equation}
    K_1 = \{p_1, p_2, \neg p_3\}, \quad K_2 = \{p_2, p_3, \neg p_1\}.
\end{equation}
Alors :
\begin{align*}
P_{\text{shared}} &= \{p_2\}, \\
P_{\text{contradictory}} &= \{p_1, p_3\}, \\
P_{\text{total}} &= \{p_1, p_2, p_3, \neg p_1, \neg p_3\}.
\end{align*}
Donc :
\[
S^*(K_1, K_2) = \frac{1}{5} - \frac{2}{5} = -\frac{1}{5}.
\]
Cela montre que malgré une propriété partagée, les contradictions entraînent une valeur de similarité négative.
\end{example}
Dans cet exemple, les entités \( K_1 \) et \( K_2 \) partagent une seule propriété \( p_2 \), mais présentent deux contradictions \( (p_1, \neg p_1) \) et \( (p_3, \neg p_3) \). La mesure \( S^* \) prend en compte ces contradictions, ce qui entraîne une valeur négative. Cela illustre comment les contradictions peuvent réduire significativement la similarité entre deux entités, reflétant ainsi la complexité des connaissances réelles.

\begin{example}[Regroupement hiérarchique]
\label{ex:hierarchie}
%\begin{example}[Regroupement hiérarchique avec propriétés contradictoires]
%\label{ex:regroupement_hierarchique_complexe}

Considérons un ensemble de cinq entités de connaissances $ \mathcal{K} = \{K_1, K_2, K_3, K_4, K_5\} $, chacune représentant une base de connaissances concernant des diagnostics médicaux. Les entités sont définies comme suit :

$$
\begin{aligned}
K_1 &= \{\text{fièvre}, \text{toux}, \neg \text{maux de tête}\}, \\
K_2 &= \{\text{fièvre}, \neg \text{toux}, \text{maux de tête}\}, \\
K_3 &= \{\text{fièvre}, \text{toux}, \text{fatigue}\}, \\
K_4 &= \{\text{essoufflement}, \text{vomissements}, \neg \text{fièvre}\}, \\
K_5 &= \{\text{essoufflement}, \neg \text{vomissements}, \text{douleur abdominale}\}.
\end{aligned}
$$

\subsubsection*{Étape 1 : Calcul des mesures de similarité paraconsistante $ S^* $}

Nous appliquons la mesure de similarité paraconsistante $ S^* $ entre chaque paire d'entités, définie par l’Équation~\ref{eq:simi_para}.
% $$
% S^*(K_i, K_j) = \frac{|P_{\text{shared}}| - |P_{\text{contradictory}}|}{|P_{\text{total}}|},
% $$
% où :
% - $ P_{\text{shared}} $ : propriétés partagées,
% - $ P_{\text{contradictory}} $ : propriétés contradictoires,
% - $ P_{\text{total}} = P_{\text{shared}} \cup P_{\text{contradictory}} \cup P_{\text{uniques}} $.
Voici les résultats calculés :

$$
\begin{aligned}
S^*(K_1, K_2) &= \frac{1 - 2}{5} = -0.2, \\
S^*(K_1, K_3) &= \frac{2 - 0}{4} = 0.5, \\
S^*(K_1, K_4) &= \frac{0 - 1}{6} = -0.17, \\
S^*(K_1, K_5) &= \frac{0 - 0}{5} = 0, \\
S^*(K_2, K_3) &= \frac{1 - 1}{5} = 0, \\
S^*(K_2, K_4) &= \frac{0 - 1}{6} = -0.17, \\
S^*(K_2, K_5) &= \frac{0 - 0}{5} = 0, \\
S^*(K_3, K_4) &= \frac{0 - 1}{6} = -0.17, \\
S^*(K_3, K_5) &= \frac{0 - 0}{5} = 0, \\
S^*(K_4, K_5) &= \frac{1 - 1}{5} = 0.
\end{aligned}
$$

\subsubsection*{Étape 2 : Définition du seuil de similarité $ \theta $}

Fixons un seuil $ \theta = 0.4 $. Selon la définition des super-catégories paraconsistantes (cf. section~\ref{subsec:super_categories}), deux entités appartiennent à la même super-catégorie si leur similarité dépasse ce seuil.

\subsubsection*{Étape 3 : Formation des super-catégories paraconsistantes $ \Xi_K^* $}

À partir des valeurs ci-dessus, nous identifions les paires satisfaisant $ S^*(K_i, K_j) > \theta $ :
\begin{itemize}
    \item $ S^*(K_1, K_3) = 0.5 > 0.4 $, donc $ K_1 $ et $ K_3 $ appartiennent à la même super-catégorie.
    \item Toutes les autres paires ont $ S^* \leq 0.4 $, donc restent isolées.
\end{itemize}
Ainsi, les super-catégories paraconsistantes sont :
\begin{equation}
    \Xi_K^* = \left\{ \{K_1, K_3\}, \{K_2\}, \{K_4\}, \{K_5\} \right\}.
\end{equation}

\subsubsection*{Étape 4 : Gestion des contradictions via l’extracteur $ E $}

Examinons $ K_1 $ et $ K_2 $, dont la similarité est négative ($ S^* = -0.2 $). Appliquons l’extracteur de contradictions $ E $ sur ces deux entités :
\begin{equation}
E(K_1) = \emptyset, \quad E(K_2) = \{\text{toux}, \neg \text{toux}\}.    
\end{equation}

Supprimons $ \neg \text{toux} $ de $ K_2 $ pour obtenir une version réparée $ K_2' = \{\text{fièvre}, \text{maux de tête}\} $, puis recalculons la similarité :
\begin{equation}
S^*(K_1, K_2') = \frac{1 - 0}{4} = 0.25 < \theta.    
\end{equation}

Même après réparation, $ K_1 $ et $ K_2' $ ne dépassent pas le seuil de regroupement.

\subsubsection*{Étape 5 : Hiérarchie finale et interprétation}

La structure hiérarchique finale est donc :
\begin{equation}
\Xi_K^* = \left\{ \{K_1, K_3\}, \{K_2\}, \{K_4\}, \{K_5\} \right\}.    
\end{equation}

L’interprétation clinique est claire :
\begin{itemize}
    \item $ \{K_1, K_3\} $ représente des patients présentant des symptômes similaires liés aux voies respiratoires supérieures.
    \item $ K_2 $, malgré sa proximité sémantique, reste isolé en raison des contradictions persistantes.
    \item $ \{K_4, K_5\} $, bien que partageant une propriété commune (essoufflement), sont séparés car leurs différences symptomatiques dominent.
\end{itemize}
%Ce regroupement reflète fidèlement la réalité médicale tout en intégrant explicitement les contradictions.
Cet exemple illustre comment le cadre paraconsistant permet de structurer des bases de connaissances complexes en tenant compte non seulement des similitudes, mais aussi des contradictions. La mesure $ S^* $, couplée à l’extracteur de contradictions $ E $, garantit une évaluation robuste et interprétable, même en présence de conflits logiques.

\end{example}

\section{Discussion}
\label{sec:discussion}

Le cadre proposé pour l’évaluation de la similarité dans les bases de connaissances représente une avancée significative par rapport aux méthodes classiques. En intégrant explicitement les contradictions via la logique paraconsistante, il permet une évaluation plus précise, interprétable et robuste de la similarité entre entités de connaissances. Cette section analyse les contributions principales du cadre, compare sa mesure $ S^* $ aux approches traditionnelles, souligne ses avantages théoriques et pratiques, et identifie des perspectives d'amélioration future.

\subsection{Contributions principales}
\label{subsec:contributions_principales}

Le cadre introduit une mesure de similarité paraconsistante $ S^* $, définie par l’Équation \ref{eq:simi_para} :% comme suit :
$$
    S^*(K_1, K_2) = S^+(K_1, K_2) - D^\pm(K_1, K_2),
$$
% \begin{definition}[Mesure de similarité paraconsistante]
% \label{def:Sstar_discussion}
% Soient deux entités de connaissances $ K_1 $ et $ K_2 $. La mesure de similarité paraconsistante est donnée par :
% $$
% S^*(K_1, K_2) = \frac{|P_{\text{shared}}| - |P_{\text{contradictory}}|}{|P_{\text{total}}|},
% $$
% où :
% \begin{itemize}
%     \item $ P_{\text{shared}} $ : propriétés partagées,
%     \item $ P_{\text{contradictory}} $ : propriétés contradictoires,
%     \item $ P_{\text{total}} $ : ensemble total des propriétés.
% \end{itemize}
% \end{definition}

Cette mesure étend les mesures classiques en intégrant les contradictions de manière explicite, ce qui permet de pénaliser les divergences tout en valorisant les similitudes.

En outre, le cadre propose des \textbf{super-catégories paraconsistantes} $ \Xi_K^* $, définies par l’Équation \ref{eq:super_cat_para} :
$$
    \Xi_K^* = \bigcup_{i=1}^n \left\{ K_i \mid S^*(K_i, K_j) > \theta, \forall j \neq i \right\}.
$$
% \begin{definition}[Super-catégories paraconsistantes]
% \label{def:super_categories_discussion}
% Étant donné un seuil $ \theta \in [-1, 1] $, les super-catégories paraconsistantes sont définies comme :
% $$
% \Xi_K^* = \bigcup_{i=1}^n \left\{ K_i \mid S^*(K_i, K_j) > \theta, \forall j \neq i \right\}.
% $$
% \end{definition}

Ces super-catégories organisent hiérarchiquement les entités de connaissances selon leur niveau de similarité, améliorant ainsi l’interprétabilité et l’adaptabilité du modèle.

Un mécanisme central du cadre est également l'extracteur de contradictions $ E $, défini par l’Équation \ref{eq:extract_contra} : %comme suit :
\begin{gather*}
       E(K) = \{ p_i \in K \mid \exists q_j \in K\\ \nonumber
    \text{ tels que } (p_i \land \neg q_j) \lor (\neg p_i \land q_j) \}.
\end{gather*}
% \begin{definition}[Extracteur de contradictions]
% \label{def:extracteur_contradictions_discussion}
% Un extracteur de contradictions est un opérateur logique $ E $ tel que :
% $$
% E(K) = \left\{ p_i \in K \mid \exists q_j \in K \text{ tels que } (p_i \land \neg q_j) \lor (\neg p_i \land q_j) \right\}.
% $$
% \end{definition}

L’extracteur $ E $ permet d’identifier les incohérences locales, facilitant ainsi la gestion des conflits sans compromettre la cohérence globale du système.

\subsection{Avantages par rapport aux approches existantes}
\label{subsec:avantages_approches_existantes}

Le cadre proposé surmonte plusieurs limitations des méthodes classiques :
\begin{itemize}
    \item \textbf{Prise en compte des contradictions} : Contrairement aux méthodes classiques qui ignorent ou neutralisent les contradictions, le cadre paraconsistant les intègre dans l'évaluation, ce qui permet une mesure de similarité plus fidèle à la réalité des données conflictuelles.
    \item \textbf{Flexibilité hiérarchique} : Les super-catégories paraconsistantes offrent une organisation dynamique basée sur un seuil variable $ \theta $, contrairement aux classifications rigides des approches traditionnelles.
    \item \textbf{Transparence accrue} : Le cadre repose sur des principes logiques explicites, ce qui le distingue des modèles boîte noire comme ceux basés sur l’apprentissage profond. Cela est crucial dans des domaines critiques comme la médecine ou le droit.
\end{itemize}
    
\subsection*{Comparaison avec des mesures classiques}
\label{subsec:comparaison_jaccard}

Pour illustrer la valeur ajoutée de la mesure de similarité paraconsistante $ S^* $ par rapport aux approches traditionnelles, nous présentons ici une comparaison explicite avec la mesure de Jaccard, souvent utilisée pour quantifier la similarité entre ensembles.

\begin{definition}[Similarité de Jaccard]
\label{def:jaccard}
Étant donnés deux ensembles finis $ A $ et $ B $, la similarité de Jaccard est donnée par :
$$
J(A, B) = \frac{|A \cap B|}{|A \cup B|}.
$$
\end{definition}

Bien que cette mesure soit robuste et intuitive dans les cas non contradictoires, elle ne prend pas en compte les propriétés conflictuelles, ce qui peut mener à des évaluations erronées dans des contextes où les contradictions sont présentes.

\subsection*{Exemple : $ S^* $ vs. Jaccard}
\label{ex:comparaison_jaccard}

Considérons deux entités de connaissances $ K_1 $ et $ K_2 $, définies comme suit :

$$
K_1 = \{p_1, p_2, \neg p_3\}, \quad K_2 = \{p_2, p_3, \neg p_1\}.
$$

Les ensembles de propriétés associés sont :

\begin{itemize}
    \item $ P_{\text{shared}} = \{p_2\} $,
    \item $ P_{\text{contradictory}} = \{p_1, p_3\} $,
    \item $ P_{\text{total}} = \{p_1, p_2, p_3, \neg p_1, \neg p_3\} $.
\end{itemize}

Calculons à la fois la mesure paraconsistante $ S^* $ et la mesure de Jaccard :
\begin{itemize}
    \item Mesure paraconsistante $ S^* $ :
    \begin{equation*}
    \begin{split}
S^*(K_1, K_2) & = \frac{|P_{\text{shared}}| - |P_{\text{contradictory}}|}{|P_{\text{total}}|} \\
& = \frac{1 - 2}{5} = -\frac{1}{5} = -0.2.    
\end{split}
\end{equation*}

    \item Mesure de Jaccard:
En considérant uniquement les littéraux positifs (approche standard), on a :
$$
A = \{p_1, p_2\}, \quad B = \{p_2, p_3\},
$$
donc :
$$
J(A, B) = \frac{|\{p_2\}|}{|\{p_1, p_2, p_3\}|} = \frac{1}{3} \approx 0.33.
$$    
\end{itemize}

% \subsubsection*{Mesure paraconsistante $ S^* $ :}
% \begin{equation*}
%     \begin{split}
% S^*(K_1, K_2) & = \frac{|P_{\text{shared}}| - |P_{\text{contradictory}}|}{|P_{\text{total}}|} \\
% & = \frac{1 - 2}{5} = -\frac{1}{5} = -0.2.    
% \end{split}
% \end{equation*}

% \subsubsection{Mesure de Jaccard}

% \subsection*{Analyse comparative}
% \label{subsec:analyse_comparative}

Dans cet exemple, la mesure de Jaccard attribue une valeur positive à la similarité ($ \sim 0.33 $), suggérant une certaine proximité entre les deux entités. Cependant, cette évaluation ignore complètement les contradictions présentes entre $ p_1 $ et $ \neg p_1 $, ainsi que $ p_3 $ et $ \neg p_3 $, ce qui peut conduire à une surestimation de la similarité dans des contextes conflictuels.
En revanche, la mesure $ S^* $ intègre explicitement ces contradictions, réduisant la similarité globale et produisant une valeur négative ($ -0.2 $). Ce résultat reflète plus fidèlement la réalité : malgré une propriété partagée, les incohérences dominent, rendant la similarité nettement problématique.

% \begin{proposition}[Dominance des contradictions]
% \label{prop:dominance_contradictions}
% Si $ |P_{\text{contradictory}}| > |P_{\text{shared}}| $, alors $ S^*(K_1, K_2) < 0 $, indiquant une divergence significative malgré des points communs.
% \end{proposition}

% \begin{corollary}[Limitation de Jaccard]
% \label{corol:limite_jaccard}
% La mesure de Jaccard ne distingue pas les propriétés contradictoires des propriétés neutres ou partagées, ce qui peut conduire à une surestimation de la similarité dans des contextes conflictuels.
% \end{corollary}

% \subsection{Conclusion sur la valeur ajoutée}
% \label{subsec:valeur_ajoutee}

L'exemple présenté précédemment illustre clairement l’avantage théorique et pratique de la mesure $ S^* $ par rapport aux mesures classiques comme celle de Jaccard. En intégrant les contradictions dans l’évaluation, $ S^* $ offre une vision nuancée, interprétable et robuste de la similarité entre entités de connaissances, particulièrement utile dans des domaines tels que le droit, la médecine ou les systèmes multi-agents où la gestion des conflits est cruciale.

Cette capacité à distinguer les similitudes réelles des faux positifs dus à des incohérences constitue une contribution essentielle du cadre paraconsistant proposé.

\subsection{Perspectives futures}
\label{subsec:perspectives_futures}

Malgré les avantages de ce cadre, il présente des défis qu'il conviendrait d'explorer dans le cadre de recherches futures :
\begin{itemize}
    \item \textbf{Complexité computationnelle} : L’identification des contradictions et le calcul de $ S^* $ peuvent devenir coûteux dans de grandes bases de connaissances. Des algorithmes optimisés ou des techniques d’approximation pourraient être développés.
    
    \item \textbf{Adaptation aux bases de connaissances évolutives} : Une extension du cadre pourrait inclure des mécanismes pour gérer l’évolution dynamique des connaissances, par exemple en intégrant des mises à jour incrémentielles ou des techniques d’apprentissage en ligne.
    
    \item \textbf{Automatisation de la sélection des propriétés} : Actuellement, la sélection des propriétés est manuelle. Des méthodes automatisées basées sur l'apprentissage supervisé ou non supervisé pourraient améliorer l’objectivité et la pertinence des comparaisons.
    
    \item \textbf{Validation empirique} : Bien que des exemples illustratifs aient été fournis, une évaluation sur des ensembles de données réels restent à réaliser pour confirmer l’efficacité du cadre dans des applications pratiques.
\end{itemize}

\subsection*{Applications dans les systèmes multi-agents}
\label{subsec:sma_applications}

Le cadre paraconsistant proposé peut être appliqué dans les systèmes multi-agents (SMA), notamment pour la résolution de conflits entre agents ayant des représentations de connaissances hétérogènes ou conflictuelles. Par exemple, les agents pourraient utiliser $ S^* $ pour détecter et résoudre les incohérences dans leurs bases de connaissances partagées, facilitant ainsi une coopération plus robuste et cohérente.

De plus, les super-catégories paraconsistantes $ \Xi_K^* $ pourraient servir à organiser dynamiquement les relations entre agents, en fonction de seuils de similarité $ \theta $ ajustables. Cette flexibilité est particulièrement précieuse dans des environnements dynamiques comme les réseaux de capteurs distribués ou les systèmes autonomes.

\subsection*{Intégration potentielle dans des formalismes logiques étendus}
\label{subsec:integration_logiques_etendues}

Une voie prometteuse serait l’intégration du cadre paraconsistant avec d’autres formalismes logiques, tels que :
\begin{itemize}
    \item \textbf{Logique modale} : Pour modéliser la croyance, la possibilité ou la nécessité dans les évaluations de similarité.
    \item \textbf{Logique temporelle} : Pour prendre en compte l’évolution des connaissances dans le temps.
    \item \textbf{Ontologies et graphes de connaissances} : Pour renforcer l’interopérabilité avec les standards actuels de représentation des connaissances.
\end{itemize}

Ces extensions pourraient enrichir les capacités du cadre tout en conservant son fondement paraconsistant.

% \subsection{Conclusion partielle}
% \label{subsec:conclusion_partielle}

% En somme, le cadre paraconsistant pour l’évaluation de la similarité dans les bases de connaissances apporte une réponse innovante aux limites des approches classiques. Il permet une gestion explicite des contradictions, une structuration hiérarchique des connaissances et une interprétation fine des similarités. Grâce à ces atouts, il trouve naturellement sa place dans des domaines exigeants en fiabilité, transparence et robustesse, comme le droit, la médecine et les systèmes multi-agents.

% Des travaux futurs devront explorer les pistes d’optimisation, d’extension et de validation empirique pour renforcer encore davantage la portée théorique et applicative de ce cadre.

\section{Conclusion}\label{sec:conclusion}

Le présent travail a proposé un \textbf{cadre paraconsistant pour l’évaluation de la similarité dans les bases de connaissances}, visant à surmonter les limitations des approches classiques. Ce cadre s’appuie sur une \textbf{logique paraconsistante}, permettant de gérer efficacement les contradictions tout en préservant la cohérence et l’interprétabilité des évaluations de similarité.

Une \textbf{mesure de similarité paraconsistante $ S^* $} a été introduite, qui intègre explicitement les propriétés contradictoires, offrant ainsi une évaluation plus robuste et nuancée que les mesures traditionnelles. Cette mesure a été formalisée en tenant compte des propriétés partagées et contradictoires, assurant une réflexivité et une symétrie, des propriétés essentielles pour une application pratique.

Un \textbf{ensemble de super-catégories paraconsistantes $ \Xi_K^* $} a également été défini, permettant d’organiser hiérarchiquement les entités de connaissances selon leurs scores de similarité. Cette structure a été validée par des exemples illustratifs dans des domaines tels que la médecine et le droit, montrant sa pertinence pratique.

Un \textbf{extracteur de contradictions $ E $} a été intégré au cadre, permettant d’identifier et de résoudre les incohérences locales, tout en préservant la cohérence globale des bases de connaissances. Des garanties théoriques ont été fournies, notamment sur la réparabilité des entités et la préservation de la structure hiérarchique après réparation.

L’analyse comparative avec des méthodes classiques, telles que la mesure de Jaccard, a permis de mettre en évidence les avantages du cadre paraconsistant : une gestion explicite des contradictions, une flexibilité accrue dans l’organisation des connaissances, et une meilleure interprétabilité des résultats.

Cependant, plusieurs défis subsistaient, notamment la \textbf{complexité computationnelle} lors de l’évaluation à grande échelle, l’\textbf{évolution dynamique} des bases de connaissances, et la \textbf{sélection des propriétés} pour la comparaison. Ces limites ont été identifiées comme des axes de recherche futurs, ouvrant la voie à des améliorations potentielles, comme l’intégration d’algorithmes optimisés, de mécanismes d’adaptation dynamique, ou de techniques d’apprentissage automatique pour la sélection des propriétés.

Enfin, une \textbf{perspective prometteuse} a été soulignée concernant l’intégration de ce cadre dans les \textbf{systèmes multi-agents (SMA)}, où la gestion des conflits et la coordination entre agents sont cruciales. Le cadre proposé a démontré son potentiel pour améliorer la robustesse, la transparence et l’adaptabilité des systèmes basés sur la logique paraconsistante.

%Ainsi, ce travail a contribué à l’enrichissement des méthodes existantes pour l’évaluation de la similarité, en offrant une approche innovante, fondée sur des principes logiques rigoureux et des applications pratiques dans des domaines critiques.

\section*{Remerciements}
L'auteur tient à remercier les personnes qui ont soutenu ce travail, ainsi que les relecteurs anonymes pour leurs commentaires constructifs qui ont contribué à l'amélioration de cet article.

\bibliographystyle{plain}
\bibliography{biblio-ch-pfia}

\end{document}